\documentclass[10pt,conference]{IEEEtran}

\usepackage[letterpaper, bottom=0.95in, top=0.75in, left=0.625in, right=0.625in]{geometry}

\usepackage{amsmath,graphicx}
\usepackage{bbm}
\usepackage{amsmath, amssymb, amscd, amsthm, amsfonts}
\usepackage{subfiles}
\usepackage{color}

\usepackage[caption=false,font=footnotesize]{subfig}

\usepackage[backend=biber, style=ieee, dashed=false,isbn=false,doi=false,url = false,bibencoding=utf8,citestyle=numeric-comp]{biblatex}
\usepackage[linesnumbered,ruled,vlined]{algorithm2e}
\usepackage[noend]{algpseudocode}

\SetKwInput{KwInput}{Input}                
\SetKwInput{KwOutput}{Output}              

\SetCommentSty{mycommfont}

\newtheorem{proposition}{Proposition}

\title{Timing Recovery and Sequence Detection for Integrate-and-Fire Time Encoding Receivers}

\author{
\IEEEauthorblockA{Neil Irwin Bernardo}\\
\IEEEauthorblockA{\textit{Electrical and Electronics Engineering Institute} \\
\textit{University of the Philippines Diliman}\\
Quezon City, Philippines\\
neil.bernardo@eee.upd.edu.ph}
}

\begin{document}
\maketitle
\begin{abstract}
Recent advances in neuromorphic signal processing have introduced time encoding machines as a promising alternative to conventional uniform sampling for low-power communication receivers. In this paradigm, analog signals are converted into event timings by an integrate-and-fire circuit, allowing information to be represented through spike times rather than amplitude samples. While event-driven sampling eliminates the need for a fixed-rate clock, receivers equipped with integrate-and-fire time encoding machines, called \emph{time encoding receivers}, often assume perfect symbol synchronization, leaving the problem of symbol timing recovery unresolved. This paper presents a joint timing recovery and data detection framework for integrate-and-fire time encoding receivers. The log-likelihood function is derived to capture the dependence between firing times, symbol timing offset, and transmitted sequence, leading to a maximum likelihood formulation for joint timing estimation and sequence detection. A practical two-stage receiver is developed, consisting of a timing recovery algorithm followed by a zero-forcing detector. Simulation results demonstrate accurate symbol timing offset estimation and improved symbol error rate performance compared to existing time encoding receivers.
\end{abstract}
\begin{IEEEkeywords}
Time Encoding Machine, Sequence Detection, Timing Recovery, Integrate-and-fire Neuron.
\end{IEEEkeywords}
\section{Introduction}
\label{section:intro}

The time encoding machine (TEM) is a bio-inspired signal representation framework that converts an analog waveform into a sequence of timing events \cite{Tuckwell_1988}. Unlike conventional sampling methods, where the signal is sampled at uniform intervals, the TEM encodes information in the time domain through the event times at which an internal state, e.g. accumulated voltage, crosses a specified threshold \cite{miskowicz2015event}. Such event-driven sampling eliminates the need for a fixed-rate clock during signal acquisition, allowing for asynchronous operation and reduced power consumption. Among various implementations of TEM, the integrate-and-fire TEM (IF-TEM) has received particular attention due to its simplicity and close correspondence to the neuronal firing process \cite{NeuronalDynamics:2014}.

Theoretical works have established conditions for perfect recovery of band-limited signals from IF firing times and demonstrated the stability of such encodings under bounded perturbations \cite{Lazar:2004}. Recovery algorithms for IF-TEM have also been proposed to handle finite rate of innovation (FRI) signals, signals residing in shift-invariant (SI) spaces, and sparse signals \cite{Gontier:2014,Alexandru:2021,Kamath:2022,Naaman:2022,Florescu:2023,Kamath:2023}. In addition to these theoretical developments, practical implementations of TEM have been investigated in several applications, including neuromorphic hardware, power-efficient data converters, image and video processing, and biomedical systems, where low-power operation and asynchronous event-driven processing are advantageous \cite{Lazar:2011,Kong:2012,Lazar:2014,Gutierrez-Galan:2022,Fu:2024,Naaman:2024,Naaman:2024b,Arora:2025,Mulleti:2025}.

Recent works have investigated the use of time encoding and neuromorphic signal processing principles in communication system design \cite{Katsaros:2025,Biswas:2026,Bernardo2025SIPS}. In \cite{Katsaros:2025}, a fully-neuromorphic receiver based on leaky integrate-and-fire neurons was shown to achieve competitive bit error rate performance with power consumption in the microwatt range. However, this neuromorphic receiver remains restricted to repetition-coded binary signaling and rectangular pulse shapes, with no demonstrated extensions to higher-order modulation formats or more general pulse designs. Building on this concept, another study \cite{Bernardo2025SIPS} introduced a time encoding receiver that supports multi-level signaling and Gaussian pulse shaping filters. The inter-symbol interference (ISI) arising from the pulse shaping filter is mitigated by using a zero-forcing (ZF) equalizer at the receiver. Similar to the neuromorphic receiver in \cite{Katsaros:2025}, the time encoding receiver in \cite{Bernardo2025SIPS} estimates the transmitted symbols from the number of firing events per symbol period.

 \begin{figure*}[t!]
    \centering
    \includegraphics[scale = .9]{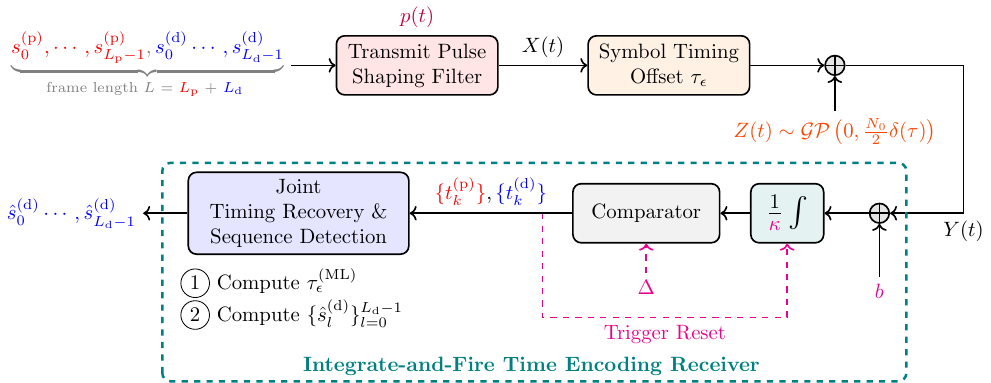}
    \caption{System diagram of the proposed integrate-and-fire time encoding receiver.}
    \label{fig:IFTEM_sysmodel}
\end{figure*}

Despite these advances, the problem of timing recovery in time-encoding receivers has not been explicitly addressed in the literature. Existing approaches typically assume perfect synchronization, i.e., that the receiver has prior knowledge of the symbol timing or that symbol boundaries are pre-aligned. However, in practice, even a small timing offset can disrupt the correspondence between transmitted symbols and observed firing times. For instance, an incorrectly aligned observation window in \cite{Katsaros:2025,Bernardo2025SIPS} may count spikes belonging to adjacent symbols, leading to erroneous spike counts and degraded detection performance.  The nonlinear and asynchronous nature of the mapping from the input waveform to the spike events further complicates this task, rendering conventional timing recovery techniques ineffective. This gap motivates the development of a framework for estimating the timing information and the transmitted symbols directly from the observed firing times.

In this paper, we present a timing recovery and sequence detection scheme for integrate-and-fire (IF) time encoding receivers. Specifically, we develop a statistical framework that characterizes the relationship among firing times, timing offset, and transmitted symbols, and formulate a maximum likelihood (ML) estimation problem for jointly recovering the timing offset and data sequence. To achieve computational tractability, we adopt a two-stage approach in which the timing offset is first estimated from the firing times induced by the pilot sequence, followed by data sequence detection using the firing times induced by the data symbols and the timing information estimated in the previous stage. The main contributions of this work are as follows:
\begin{itemize}
\item We derive the log-likelihood function for the joint estimation of the symbol timing and data sequence given the observed firing times and known pilot symbols. This formulation enables a unified statistical treatment of timing recovery and data detection.
\item We identify a necessary condition for the ML estimate of the timing offset and exploit this necessary condition to design a practical timing recovery algorithm.
\item We propose a sequence detection method based on a ZF receiver that operates directly on the firing times instead of the firing count per symbol period. The proposed detector effectively mitigates ISI due to the pulse shaping filter and can operate at a low firing rate.
\item We validate the proposed approaches through simulations, demonstrating robust timing estimation and improved symbol error rate (SER) performance compared to existing time encoding receivers.



\end{itemize}

\section{System Architecture}
\label{section:sys_model}


We consider the pilot-aided communication model for a time encoding receiver illustrated in Fig. \ref{fig:IFTEM_sysmodel}. The transmit symbol sequence is divided into frames of length $L$. Each frame consists of a length-$L_{\mathrm{p}}$ pilot sequence $\{s_{l}^{(\mathrm{p})}\}_{l = 0}^{L_{\mathrm{p}}-1}$ and a length-$L_{\mathrm{d}}$ data sequence $\{s_{l}^{(\mathrm{d})}\}_{l = 0}^{L_{\mathrm{d}}-1}$, where $L_{\mathrm{d}} = L - L_{\mathrm{p}}$. The pilot sequence is a fixed sequence and known to both receiver and transmitter. Meanwhile, the symbols in the data sequence are drawn independently from an $M$-ary pulse amplitude modulation (PAM) constellation set $\mathcal{A}$.

The transmit symbol sequence is fed to a pulse shaping filter $p(t)$ in order to produce the waveform $X(t)$ that lies in the shift-invariant (SI) space. More precisely, the signal $X(t)$ can be written as
\begin{align*}
    X(t) =& \sum_{l=0}^{L_{\mathrm{p}}-1} s_{l}^{(\mathrm{p})}p(t-lT) + \sum_{l=0}^{L_{\mathrm{d}}-1} s_{l}^{(\mathrm{d})}p(t-(l+L_{\mathrm{p}})T),
\end{align*}
where $T$ is the symbol period. We assume that the pulse shaping filter $p(t)$ spans $2L_{\mathrm{f}}+1$ symbol periods, with $L_{\mathrm{f}}\in\mathbb{Z}$, and has negligible effect outside $t\in[-(L_{\mathrm{f}}+0.5)T,+(L_{\mathrm{f}}+0.5)T]$. We also assume that the guard interval of length $L_{\mathrm{f}}$ is placed between frames to eliminate inter-frame interference.

The channel introduces an unknown symbol timing offset $\tau_{\epsilon}\in[-0.5T,+0.5T)$ to the transmitted signal $X(t)$ and corrupts the delayed signal with additive white Gaussian noise (AWGN). The received signal has the form 
\begin{align}\label{eq:rx_sig}
    Y(t) =& X(t-\tau_{\epsilon}) + Z(t),
\end{align}
where $Z(t)$ is modeled as a zero-mean white Gaussian process with autocorrelation function $R_{Z}(\tau) = \frac{N_0}{2}\delta(\tau)$ and $N_0$ is the noise spectral density (in Watts/Hz).

At the time encoding receiver, the continuous-time received signal \( Y(t) \) is sampled using an IF-TEM characterized by parameters $b$, $\Delta$, and $\kappa$. The received signal is first shifted by a constant bias $b$, chosen to ensure that $\Pr(Y(t)+b < 0)$ remains negligible. The biased signal is then scaled by $\kappa$ and passed through an integrator, as shown in Fig.~\ref{fig:IFTEM_sysmodel}. The integrator output is continuously compared against a threshold $\Delta$. Each time the threshold is reached, a firing event occurs at time $t_k$, which generates a spike and triggers the integrator to reset. The \emph{firing times} $\{t_k\}$ obey the equilibrium condition
\begin{align}\label{eq:equilibrium_cond}
    \Delta = \frac{1}{\kappa}\int_{t_{k-1}}^{t_k}\big(Y(t)+b\big)\,dt.
\end{align}
Here, we assume that the integrator is at rest at some time $t = t_0$ known to the receiver and starts recording events from this time onward.

Because the firing times can grow without bound, it is often more practical to represent the signal by the intervals between successive firing times, known as \emph{time encodings}. These time encodings $T_{k} = t_{k} - t_{k-1}$ are bounded by
\begin{align}
    \frac{\kappa\Delta}{b+c_{\max}} \leq T_{k} \leq \frac{\kappa\Delta}{b-c_{\max}}\quad \forall k,
\end{align}
where $c_{\max}$ is the maximum amplitude of the IF-TEM input \cite{Lazar:2004}. Nonetheless, the firing times can be recovered from the time encodings using $t_{k} = t_0+\sum_{k'=1}^{k}T_{k'}$.

Suppose there are $K+1$ firing times. Since a length-$L$ frame contains pilot symbols and data symbols, we label the $K+1$ firing times for the data symbol periods and pilot symbol periods as $\{t_{k}^{(\mathrm{p})}\}_{k=0}^{K_{\mathrm{p}}}$ and $\{t_{k}^{(\mathrm{d})}\}_{k=1}^{K_{\mathrm{d}}}$, respectively, where $K = K_{\mathrm{p}}+K_{\mathrm{d}}$. Specifically, $t_{k}$ is tagged as a \emph{pilot firing time} $t_{k}^{(\mathrm{p})}$ if $t_k \in [-0.5T, (L_{\mathrm{p}}-0.5)T)$ within a frame and is tagged as a \emph{data firing time} $t_{k}^{(\mathrm{d})}$ otherwise. Since the pilot sequence appears before the data sequence, we set $t_{0}^{(\mathrm{d})} = {t_{K_{\mathrm{p}}}^{(p)}}$.


The recovery process at the time encoding receiver involves two phases: (1) the \emph{timing recovery phase} and (2) the \emph{data detection phase}. During the timing recovery phase, the time encoding receiver computes the maximum likelihood (ML) estimate of the channel-induced timing offset $\tau_{\epsilon}$ using the pilot firing times $\{t_{k}^{(\mathrm{p})}\}_{k=0}^{K_{\mathrm{p}}}$. After computing $\hat{\tau}_{\epsilon}^{(\mathrm{ML})}$, the receiver then  identifies the length-$L_{\mathrm{d}}$ data sequence sent by the transmitter from the data firing times $\{t_{k}^{(\mathrm{d})}\}_{k=0}^{K_{\mathrm{d}}}$. The detailed operation of the time encoding receiver during these phases is described in the following section.

\section{Time Encoding Receiver Algorithms}
\label{section:recovery}

\subsection{Log-likelihood for Joint ML Estimation}
A key element in the estimation process is the log-likelihood function, which captures the statistical dependence between the observed firing times and the unknown parameters of interest. In the following proposition, we derive the log-likelihood function for the time encoding-based communication model presented in Section \ref{section:sys_model}.
\begin{proposition}\label{proposition:log_likelihood}
     Consider the observed firing time vector $\mathbf{t} = [t_0,\cdots, t_K]^T$ and the known pilot sequence vector $\mathbf{s}_{\mathrm{p}} = [s_{0}^{(\mathrm{p})},\cdots, s_{L_{\mathrm{p}}-1}^{(\mathrm{p})}]^T$ available at the time encoding receiver. Let $\mathbf{s}_{\mathrm{d}} = [s_{0}^{(\mathrm{d})},\cdots, s_{L_{\mathrm{d}}-1}^{(\mathrm{d})}]^T$ denote the data sequence vector. The log-likelihood function for jointly estimating $\mathbf{s}_{\mathrm{d}}$ and $\tau_{\epsilon}$ is 
     \begin{align}\label{eq:loglikelihood_joint}
&\ln\mathcal{L}\left(\mathbf{s}_{\mathrm{d}},\tau_{\epsilon}|\mathbf{t},\mathbf{s}_{\mathrm{p}}\right)\nonumber \\
&\qquad=\; 
C_0(\mathbf{t}) - \left\|\mathbf{T}^{\frac{1}{2}}\left(\mathbf{y} - \mathbf{P}^{(\tau_{\epsilon})}\mathbf{s}_{\mathrm{p}}- \mathbf{G}^{(\tau_{\epsilon})}\mathbf{s}_{\mathrm{d}}\right)\right\|^2,
     \end{align}
     where the $k$-th element of $\mathbf{y}\in\mathbb{R}^{K\times 1}$ is
     \begin{align}
         y_k = \kappa\cdot\Delta - b\cdot [t_{k} - t_{k-1}],
     \end{align}
     the $k$-th row and $(l+1)$-th column of $\mathbf{P}^{(\tau_{\epsilon})}\in\mathbb{R}^{K\times L_{\mathrm{p}}}$ is
     \begin{align}
          P^{(\tau_{\epsilon})}_{k,l+1} = \int_{t_{k-1}}^{t_k}p(t-lT-\tau_{\epsilon})dt,
     \end{align}
      the $k$-th row and $(l+1)$-th column of $\mathbf{G}^{(\tau_{\epsilon})}\in\mathbb{R}^{K\times L_{\mathrm{d}}}$ is
      \begin{align}
          G^{(\tau_{\epsilon})}_{k,l+1} = \int_{t_{k-1}}^{t_k}p(t-(L_{\mathrm{p}}+l)T-\tau_{\epsilon})dt,
     \end{align}
     $\mathbf{T}\in\mathbb{R}^{K\times K}$ is a diagonal matrix whose $k$-th diagonal entry is $[\mathbf{T}]_{k,k} = \frac{1}{t_{k}-t_{k-1}}$, and $C_{0}(\mathbf{t})$ is a term independent of the unknown parameters.
\end{proposition}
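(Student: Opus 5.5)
The plan is to rewrite the equilibrium condition \eqref{eq:equilibrium_cond} as a linear Gaussian observation model for each inter-firing interval, and then read off the Gaussian log-likelihood. Substituting \eqref{eq:rx_sig} into \eqref{eq:equilibrium_cond} and moving the deterministic bias term to the left gives, for $k=1,\dots,K$,
\begin{align*}
\kappa\Delta - b\,(t_k-t_{k-1}) = \int_{t_{k-1}}^{t_k} X(t-\tau_{\epsilon})\,dt + \int_{t_{k-1}}^{t_k} Z(t)\,dt .
\end{align*}
The left side is exactly $y_k$. Expanding $X(t-\tau_{\epsilon})$ through the shift-invariant representation and exchanging the finite sums with the integrals turns the signal term into $\sum_{l} s_l^{(\mathrm{p})} P^{(\tau_{\epsilon})}_{k,l+1} + \sum_{l} s_l^{(\mathrm{d})} G^{(\tau_{\epsilon})}_{k,l+1}$. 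Stacking over $k$ then gives $\mathbf{y} = \mathbf{P}^{(\tau_{\epsilon})}\mathbf{s}_{\mathrm{p}} + \mathbf{G}^{(\tau_{\epsilon})}\mathbf{s}_{\mathrm{d}} + \mathbf{w}$, where $w_k = \int_{t_{k-1}}^{t_k} Z(t)\,dt$.

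Next I characterize $\mathbf{w}$, conditioning on the firing instants (treating the intervals as given). Each $w_k$ is a linear functional of white Gaussian noise, so it is zero-mean Gaussian. By the It\^o-type isometry for white noise,
\begin{align*}
\mathbb{E}[w_k w_{k'}] = \frac{N_0}{2}\,\big|[t_{k-1},t_k]\cap[t_{k'-1},t_{k'}]\big| .
\end{align*}
The intervals overlap only at endpoints, so the noise terms are uncorrelated. The covariance is therefore diagonal with entries $\frac{N_0}{2}(t_k-t_{k-1})$, which is proportional to $\mathbf{T}^{-1}$. Writing the multivariate Gaussian density and taking logarithms yields
\begin{align*}
-\tfrac12\sum_k \ln\!\big(\pi N_0 (t_k-t_{k-1})\big) - \frac{1}{N_0}\big\|\mathbf{T}^{1/2}(\mathbf{y}-\mathbf{P}^{(\tau_{\epsilon})}\mathbf{s}_{\mathrm{p}}-\mathbf{G}^{(\tau_{\epsilon})}\mathbf{s}_{\mathrm{d}})\|^2 .
\end{align*}
The log-determinant depends only on $\mathbf{t}$, so it is absorbed into $C_0(\mathbf{t})$. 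The positive factor $1/N_0$ is absorbed by scaling; equivalently, the statement is read up to a positive multiplicative constant, which does not change the maximizer. This gives \eqref{eq:loglikelihood_joint}. The disjoint pilot and data tagging of firing times plays no role here beyond allowing both sums to run over all $K$ rows.

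The main obstacle is rigor in the change of variables. Strictly, the firing times are random and determined by the noise, so the density of $\mathbf{t}$ is not literally the density of $\mathbf{w}$ evaluated at $\mathbf{y}-\cdots$; a Jacobian and the stopping-time structure of the intervals intervene. I would handle this in the standard way used for IF-TEM likelihoods. The map from $(w_1,\dots,w_K)$ to $(t_1,\dots,t_K)$ is triangular: $t_k$ depends on $w_k$ given $t_{k-1}$, and its diagonal derivative is $\partial y_k/\partial t_k$, approximately $-(b + X(t_k-\tau_\epsilon))$. I would argue, as an approximation in the high-bias regime $b\gg c_{\max}$, that this Jacobian is essentially parameter-independent and can be absorbed into $C_0(\mathbf{t})$. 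I would state this approximation explicitly rather than claim exactness.
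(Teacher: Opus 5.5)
Your derivation is essentially the paper's. The paper's proof also rewrites the equilibrium condition as a per-interval linear Gaussian model; it divides by $T_k$, so its noise is $Z_k\sim\mathcal{N}(0,\frac{N_0}{2T_k})$ rather than your unnormalized $w_k$, but this gives the identical quadratic form. It then writes the Gaussian density, absorbs the $\mathbf{t}$-only normalization into $C_0(\mathbf{t})$, and drops the $1/N_0$ factor just as you do by rescaling.

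The paper simply treats $\mathbf{t}$ as given and never raises the stopping-time issue, so your caveat is extra care rather than a missing step. If you want to firm it up, the strong Markov property together with the inverse-Gaussian first-passage density under locally constant drift gives exactly $\exp\big(-(y_k-[\mathbf{P}^{(\tau_{\epsilon})}\mathbf{s}_{\mathrm{p}}+\mathbf{G}^{(\tau_{\epsilon})}\mathbf{s}_{\mathrm{d}}]_k)^2/(N_0T_k)\big)$. Its prefactor $\kappa\Delta/\sqrt{\pi N_0T_k^3}$ depends only on $\mathbf{t}$. This avoids the parameter-dependent factor $b+X(t_k-\tau_{\epsilon})$ that your triangular-Jacobian heuristic introduces, and with it the need for a high-bias assumption.
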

\begin{proof}
See Appendix \ref{proof:appendix_A}.
\end{proof}
A joint maximum likelihood (ML) estimation of $\left(\mathbf{s}_{\mathrm{d}},\tau_{\epsilon}\right)$  can therefore be formulated as
\begin{align}
\left(\mathbf{\hat{s}}_{\mathrm{d}}^{(\mathrm{ML})},\hat{\tau}_{\epsilon}^{(\mathrm{ML})}\right) = \underset{\substack{\mathbf{s}_{\mathrm{d}}\in\mathcal{A}^{L_{\mathrm{d}}}\\\tau_{\epsilon}\in\left[-\frac{T}{2},+\frac{T}{2}\right]}}{\arg\max}\ln\mathcal{L}\left(\mathbf{s}_{\mathrm{d}},\tau_{\epsilon}|\mathbf{t},\mathbf{s}_{\mathrm{p}}\right),
\end{align}
which constitutes a non-convex and NP-hard optimization problem. To make the estimation process computationally tractable, we decompose it into two stages: (a) a timing recovery phase, where the timing offset $\tau_{\epsilon}$ is estimated from the firing times using the pilot sequence, and (b) a data detection phase, where the data symbols are recovered using the estimated timing offset and the observed data firing times.

\begin{algorithm}[t]\label{algo:ML_timing_opt}
\DontPrintSemicolon
  \KwInput{$N_{\mathrm{guess}}$, $\{t_{k}^{(\mathrm{p})}\}_{k=0}^{\tilde{K}_{\mathrm{p}}}$,$\{s_l^{(\mathrm{p})}\}_{l=0}^{L_{\mathrm{p}}-1}$} 
  \KwOutput{$\hat{\tau}_{\epsilon}^{(\mathrm{ML})}$ }
  $\hat{\tau}_{\epsilon}^{(\mathrm{ML})} = \mathrm{NULL}$  \tcp*{Initialize}
  \tcc{Define the objective function in \eqref{eq:ML_timing}}
  $\mathrm{ML\_Obj(x)} = \sum_{k = 1}^{\tilde{K}_{\mathrm{p}}} \frac{\left(\tilde{y}_{k}^{(\mathrm{p})} - \sum_{l = 0}^{L_{\mathrm{p}}-1}s_l^{\mathrm{(p)}}\tilde{P}^{(\mathrm{x})}_{k,l+1}\right)^2}{2(t_{k}^{(\mathrm{p})} - t_{k-1}^{(\mathrm{p})})}$\;
   \For{$\ell = 0\;\mathbf{ to }\;N_{\mathrm{guess}}-1$} 
   {
        $\tau_{\mathrm{init}}^{(\ell)} = -0.5 + \frac{\ell}{N_{\mathrm{guess}} - 1}$\tcp*{$\ell$-th initial guess}
        \tcc{Get root of \eqref{eq:first_order_opt} with initial guess $\tau_{\mathrm{init}}^{(\ell)}$}
        $\tau_{\mathrm{candidate}}^{(\ell)}$ = result of applying Newton's method to  \eqref{eq:first_order_opt} with $\tau_{\mathrm{init}}^{(\ell)}$ as initial guess.\;
        \If{$\mathrm{ML\_Obj(\hat{\tau}_{\epsilon}^{(\mathrm{ML})})} > \mathrm{ML\_Obj(\hat{\tau}_{\mathrm{candidate}}^{(\ell)})}$} {
            $\hat{\tau}_{\epsilon}^{(\mathrm{ML})}  = \hat{\tau}_{\mathrm{candidate}}^{(\ell)}$  \tcp*{update the estimate}
        }
   }
\caption{ML Timing Offset Estimation}
\end{algorithm}

\subsection{Timing Recovery Phase} \label{subsection:timing_recovery}
Due to the absence of a fixed-rate clock in an asynchronous sampling system, it is not straightforward to identify the appropriate timing boundaries of a symbol period given the firing times. To address this, the receiver estimates the timing offset $\tau_{\epsilon}$ from the pilot firing times and the known pilot symbol sequence. This subsection presents the algorithm for obtaining the ML estimate of $\tau_{\epsilon}$.

\begin{figure*}[!ht]
\begin{align}\label{eq:first_order_opt}
\frac{\ln\mathcal{L}\left(\tau_{\epsilon}|\mathbf{\tilde{t}}_{\mathrm{p}},\mathbf{s}_{\mathrm{p}}\right)}{\partial \tau_{\epsilon}}\Bigg|_{\tau_{\epsilon} = \hat{\tau}_{\epsilon}^{(\mathrm{ML})}} &= 
     \sum_{k = 1}^{\tilde{K}_{\mathrm{p}}}\left( \frac{\tilde{y}_{k}^{(\mathrm{p})} - \sum_{l = 0}^{L_{\mathrm{p}}-1}s_l^{\mathrm{(p)}}\tilde{P}^{\left({\tau}_{\epsilon}\right)}_{k,l+1}}{T_{k}^{(\mathrm{p})}}\right)\left(\sum_{l = 0}^{L_{\mathrm{p}}-1}s_{l}^{(\mathrm{p})}\cdot\frac{\partial \tilde{P}^{\left({\tau}_{\epsilon}\right)}_{k,l+1}}{\partial \tau_{\epsilon}}\right)\Bigg|_{\tau_{\epsilon} = \hat{\tau}_{\epsilon}^{(\mathrm{ML})}} = 0.
         \end{align}
         \hrulefill
\end{figure*} 

To mitigate the effect of ISI from unknown data symbols, only the first $\tilde{L}_{\mathrm{p}} = L_{\mathrm{p}} - L_{\mathrm{f}}$ pilot symbol periods are considered in the timing recovery phase. We will call $\tilde{L}_{\mathrm{p}}$ the \emph{effective pilot length}. Let $\boldsymbol{\tilde{t}}_{\mathrm{p}} = [t_{0}^{(\mathrm{p})},\cdots, t_{\tilde{K}_{\mathrm{p}}}^{(\mathrm{p})}]^T$ denote the vector of $\tilde{K}_{\mathrm{p}}+1$ pilot firing times within the interval $[-0.5T, (\tilde{L}_{\mathrm{p}} - 0.5)T)$. By using this subset of pilot firing times, the log-likelihood function in \eqref{eq:loglikelihood_joint} reduces to
\begin{align}\label{eq:loglikelihood_timing}
\ln\mathcal{L}\left(\tau_{\epsilon}|\mathbf{\tilde{t}}_{\mathrm{p}},\mathbf{s}_{\mathrm{p}}\right) = C_{1}(\mathbf{\tilde{\mathbf{t}}}_{\mathrm{p}}) - \|\mathbf{\tilde{T}}^{\frac{1}{2}}_{\mathrm{p}}\left(\mathbf{\tilde{y}}_{\mathrm{p}} - \mathbf{\tilde{P}}^{(\tau_{\epsilon})}\mathbf{s}_{\mathrm{p}}\right)\|^2,
\end{align}
where the $k$-th element of $\mathbf{\tilde{y}}_{\mathrm{p}}\in\mathbb{R}^{\tilde{K}_{\mathrm{p}}\times 1}$ is   
\begin{align}
    \tilde{y}_k^{(\mathrm{p})} = \kappa\cdot\Delta - b\cdot [t_{k}^{(\mathrm{p})} - t_{k-1}^{(\mathrm{p})}],
\end{align}
the $k$-th row and $(l+1)$-th column of $\mathbf{\tilde{P}}^{(\tau_{\epsilon})}\in\mathbb{R}^{\tilde{K}_{\mathrm{p}}\times L_{\mathrm{p}}}$ is
     \begin{align}
          \tilde{P}^{(\tau_{\epsilon})}_{k,l+1} = \int_{t_{k-1}^{(\mathrm{p})}}^{t_k^{\mathrm{(p)}}}p(t-lT-\tau_{\epsilon})dt,
     \end{align}
and $\mathbf{\tilde{T}_{\mathrm{p}}}\in\mathbb{R}^{\tilde{K}_{\mathrm{p}}\times \tilde{K}_{\mathrm{p}}}$ is a diagonal matrix whose $k$-th diagonal element is $[\mathbf{\tilde{T}}_{\mathrm{p}}]_{k,k}=\frac{1}{t_{k}^{(\mathrm{p})} - t_{k-1}^{(\mathrm{p})} }$.

The ML estimate for $\tau_{\epsilon}$ can be formulated as
\begin{align}\label{eq:ML_timing}
\hat{\tau}_{\epsilon}^{(\mathrm{ML})} = \underset{\tau_{\epsilon}\in[-0.5T,0.5T]}{\arg\min} \;\sum_{k = 1}^{\tilde{K}_{\mathrm{p}}} \frac{\left(\tilde{y}_{k}^{(\mathrm{p})} - \sum_{l = 0}^{L_{\mathrm{p}}-1}s_l^{\mathrm{(p)}}\tilde{P}^{(\tau_{\epsilon})}_{k,l+1}\right)^2}{2(t_{k}^{(\mathrm{p})} - t_{k-1}^{(\mathrm{p})})},
 \end{align}
the objective function in \eqref{eq:ML_timing} can be interpreted as a weighted sum of squared errors, where each residual is weighted inversely by the corresponding inter-spike interval duration.

A necessary condition for \(\hat{\tau}_{\epsilon}^{(\mathrm{ML})}\) is the first-order optimality condition given in \eqref{eq:first_order_opt}. Since the objective in \eqref{eq:ML_timing} is non-convex with respect to $\tau_{\epsilon}$, multiple stationary points may exist. We therefore employ Newton’s method with several initial guesses on \eqref{eq:first_order_opt} to locate potential solutions and select the one yielding the minimum objective value. The complete procedure is summarized in Algorithm~\ref{algo:ML_timing_opt}.

\subsection{Data Detection Phase}\label{subsection:data_detection}

After estimating the timing offset $\tau_{\epsilon}$, our next step is to recover the data sequence vector $\mathbf{s}_{\mathrm{d}}$ from the $K_{\mathrm{d}}+1$ data firing times $\mathbf{t}_{\mathrm{d}} = [t_0^{(\mathrm{d})},\cdots,t_{K_{\mathrm{d}}}^{(\mathrm{d})}]^T$. Previous approaches \cite{Bernardo2025SIPS,Katsaros:2025} detect data symbols by counting the number of firing events per symbol period and mapping this count to the corresponding symbol in the constellation. While this technique yields a coarse estimate of symbol amplitudes, it ignores the finer temporal variations within each symbol interval. To improve detection accuracy, we use $\mathbf{t}_{\mathrm{d}}$ directly instead of the number of firing times per symbol period to generate the symbol sequence estimate.

By substituting the ML timing estimate $\hat{\tau}_{\epsilon}^{(\mathrm{ML})}$ into the likelihood function in \eqref{eq:loglikelihood_joint} and omitting constant terms that do not depend on $\mathbf{s}_{\mathrm{d}}$, the ML estimation of $\mathbf{s}_{\mathrm{d}}$ becomes
\begin{align}\label{eq:ML_detection}
\mathbf{\hat{s}}_{\mathrm{d}}^{(\mathrm{ML})} 
&= \underset{\mathbf{s}_{\mathrm{d}}\in\mathcal{A}^{L_{\mathrm{d}}}}{\arg\min}\;\left\|\mathbf{T}_{\mathrm{d}}^{\frac{1}{2}}\left(\mathbf{y}_{\mathrm{d}}- \mathbf{\bar{G}}\mathbf{s}_{\mathrm{d}}\right)\right\|^2,
\end{align}
where the $k$-th element of $\mathbf{y}_{\mathrm{d}}\in\mathbb{R}^{K_{\mathrm{d}}\times 1}$ is given by
\begin{align}\label{eq:y_k_d}
         y_k^{(\mathrm{d})} =& \kappa\cdot\Delta - b\cdot [t_{k}^{(\mathrm{d})} - t_{k-1}^{(\mathrm{d})}] \nonumber\\
         &- \sum_{l=L_{\mathrm{p}}-1-L_{\mathrm{f}}}^{L_{\mathrm{p}}-1}s_{l}^{(\mathrm{p})}\int_{t_{k-1}^{(\mathrm{d})}}^{t_k^{(\mathrm{d})}}p\left(t-lT-\hat{\tau}_{\epsilon}^{(\mathrm{ML})}\right)dt,
\end{align}
the $k$-th row and $(l+1)$-th column of $\mathbf{\bar{G}}\in\mathbb{R}^{K_{\mathrm{d}}\times L_{\mathrm{d}}}$ is
      \begin{align}
          \bar{G}^{(\hat{\tau}_{\epsilon}^{(\mathrm{ML})})}_{k,l+1} = \int_{t_{k-1}^{(\mathrm{d})}}^{t_k^{(\mathrm{d})}}p\left(t-(L_{\mathrm{p}}+l)T-\hat{\tau}_{\epsilon}^{(\mathrm{ML})}\right)dt,
     \end{align}
and $\mathbf{T}_{\mathrm{d}}\in\mathbb{R}^{K_{\mathrm{d}}\times K_{\mathrm{d}}}$ is a diagonal matrix whose $k$-th diagonal entry is $
[\mathbf{T}_{\mathrm{d}}]_{k,k} = \frac{1}{t_{k}^{(\mathrm{d})} - t_{k-1}^{(\mathrm{d})}}$.
The final term in \eqref{eq:y_k_d} accounts for interference contributions from the known pilot sequence.

A variety of linear and nonlinear sequence detection techniques can be used to approximate the solution of the ML detection problem in \eqref{eq:ML_detection} \cite{Larsson:2009}. In this work, we adopt a suboptimal yet computationally efficient linear detector called the ZF receiver. The ZF pre-estimate is given by
\begin{align}
\mathbf{\tilde{s}}_{\mathrm{d}}^{(\mathrm{ZF})} = \left[\mathbf{\bar{G}}^T\mathbf{T}_{\mathrm{d}}\mathbf{\bar{G}}\right]^{-1}\mathbf{\bar{G}}^T\mathbf{T}_{\mathrm{d}}\mathbf{y}_{\mathrm{d}}.
\end{align}
Subsequently, the elements of the ZF pre-estimate vector $\mathbf{\tilde{s}}_{\mathrm{d}}^{(\mathrm{ZF})}$ are mapped to the nearest symbol in the $M$-PAM constellation through hard decision decoding. The result of the hard decision decoding yields the length-$L_{\mathrm{d}}$ sequence estimate $\mathbf{\hat{s}}_{\mathrm{d}} =~ [\hat{s}_{0}^{(\mathrm{d})},\cdots,\hat{s}_{L_{\mathrm{d}}-1}^{(\mathrm{d})}]^T$.

It is important to note that the ZF data sequence estimate derived in this work fundamentally differs from the ZF detector presented in \cite{Bernardo2025SIPS}. In our formulation, the ZF receiver operates on the data firing time vector $\mathbf{t}_{\mathrm{d}}$, which retains the fine temporal structure of the firing times. In contrast, the approach in \cite{Bernardo2025SIPS} relies on the vector $\mathbf{N} = [N_0, \cdots, N_{L_{\mathrm{d}}-1}]^T$, where each $N_l$ denotes the total number of firing events within the interval $[(L_{\mathrm{p}}+l - 0.5)T, (L_{\mathrm{p}}+l + 0.5)T)$. By leveraging the precise timing information rather than aggregate spike counts, the proposed receiver achieves improved detection accuracy, particularly in low firing rate regimes.

\section{Numerical Results}
\label{section:numerical_results}

\begin{figure}[t!]
    \hspace{-.5cm}
    \includegraphics[scale = .5]{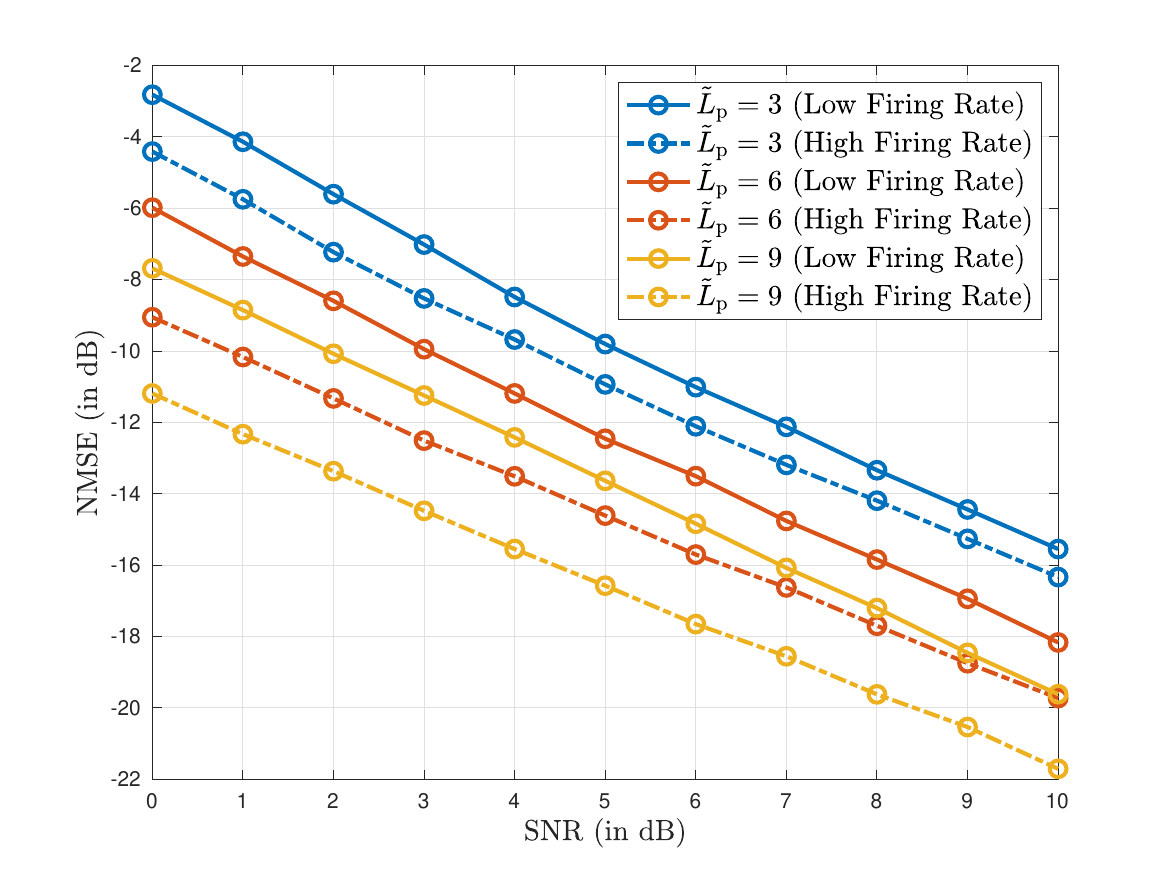}
    \caption{NMSE (in dB) vs. SNR (in dB) of the proposed timing recovery for the integrate-and-fire time encoding receiver under different operating modes ($b = 1.50$ and $b = 4.50$) and different effective pilot lengths ($\tilde{L}_{\mathrm{p}} = 3$, $\tilde{L}_{\mathrm{p}} = 6$, and $\tilde{L}_{\mathrm{p}} = 9$).}
    \label{fig:NMSE_curves}
\end{figure}

In this section, we evaluate the performance of the proposed timing recovery and data detection scheme for integrate-and-fire time encoding receivers under various system parameters. We consider a pilot-aided transmission model with a total frame length of $L=100$ symbols and a $4$-PAM modulation scheme with constellation $\mathcal{A} = \{-3,-1,+1,+3\}$. The length-$L_{\mathrm{p}}$ pilot sequence is composed of alternating symbols of $+1$ and $-1$, which facilitates timing acquisition. We also consider a Gaussian pulse shaping filter whose impulse response is 
\begin{align}
    p_{\mathrm{Gauss}}(t) = \frac{\sqrt{\pi}}{a}\exp\left(-\frac{\pi^2t^2}{a^2}\right),
\end{align}
where 
$a$ serves as a shaping parameter that is inversely related to the product of the 3-dB bandwidth $B_{\mathrm{3dB}}$ and the symbol period $T$. More precisely, the parameter $a$ is given by 
\begin{align}
    a = \frac{1}{B_{3\mathrm{dB}}\cdot T}\sqrt{\frac{\ln2}{2}}.
\end{align}
For the experiments, we set $T = 1$ second and $B_{\mathrm{3dB}} = 0.5$~Hz. The filter span of $p(t)$ is set to 5 symbol periods ($L_{\mathrm{f}} = 2$). 

We implement the IF-TEM component of the integrate-and-fire time encoding receiver using the time encoding and decoding toolkit \cite{bionet_ted_matlab} from the Bionet Group of Columbia University. The firing threshold and the integrator constant are set to $\Delta = 1.0$ and $\kappa = 0.1$, respectively. Two operating modes are considered: (a) a \emph{low firing rate mode} with bias $b = 1.50$ and (b) a \emph{high firing rate mode} with bias $b = 4.50$. Using the transmitted signal $X(t)$ described above and assuming a noiseless environment, the average spike rates for the low and high firing rate modes are approximately 15.34 and 44.91 firing events per second, respectively.

\begin{figure}[t!]
    \hspace{-.5cm}
    \includegraphics[scale = .5]{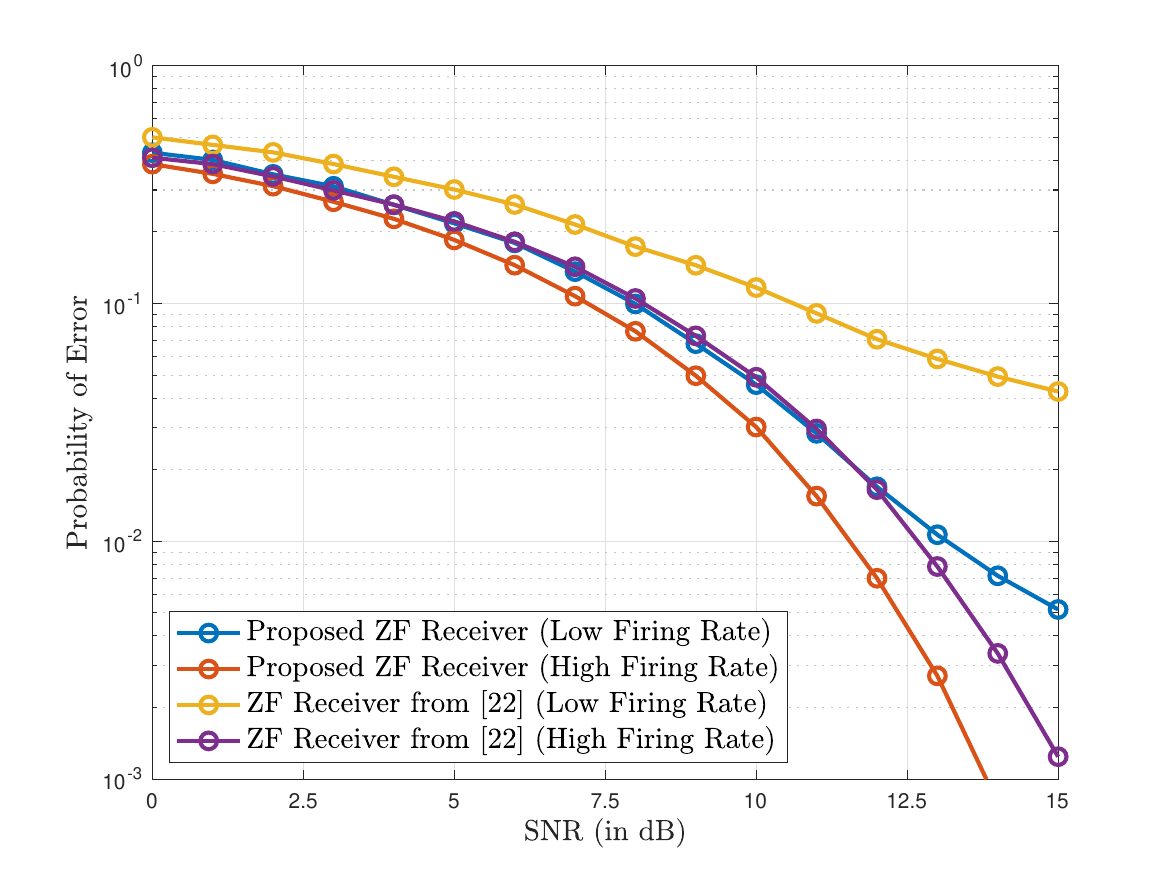}
    \caption{Symbol Error Rate vs. SNR (in dB) of the proposed ZF-based sequence detection for the integrate-and-fire time encoding receiver under different operating modes ($b = 1.50$ and $b = 4.50$). Superimposed in the plot is the ZF receiver presented in \cite{Bernardo2025SIPS}.}
    \label{fig:SEP_curves}
\end{figure}

We first evaluate the accuracy of the proposed timing recovery algorithm based on the ML formulation in \eqref{eq:ML_timing}. We assume that the symbol timing offset is drawn uniformly from the interval $[-0.5T,+0.5T]$, and the performance metric considered is the normalized mean squared error (NMSE) between the ML estimate and true timing offsets. The number of initial guesses in Algorithm \ref{algo:ML_timing_opt} is set to $N_{\mathrm{guess}} = 5$. Fig.~\ref{fig:NMSE_curves} depicts the NMSE as a function of signal-to-noise ratio (SNR) in dB and under different effective pilot lengths $\tilde{L}_{\mathrm{p}}$. It can be observed that the timing recovery performance improves as SNR increases, reflecting the fact that the observed firing times become more reliable at higher SNR levels. Additionally, longer effective pilot length leads to lower NMSE, indicating that more pilot firing times provide richer temporal information that is relevant for estimating the unknown timing offset. 

The effect of the firing rate is also evident: in the low firing rate mode ($b = 1.50$), the NMSE remains slightly higher at low SNRs due to the sparser sampling of the input signal, whereas in the high firing rate mode ($b = 4.50$), the denser spike occurrences allow for more precise timing estimates across all SNR levels. The gap between the NMSE of the low and high firing rate modes also increases with the pilot length, which can be attributed to the growing difference in the number of pilot firing times between the two modes as the effective pilot length increases.

After timing recovery, the data sequence $\mathbf{s}_{\mathrm{d}}$ is estimated from the observed data firing times $\mathbf{t}_{\mathrm{d}}$ using the ZF-based detection approach described in Section \ref{subsection:data_detection}. Fig.~\ref{fig:SEP_curves} compares the SER of our proposed ZF receiver for integrate-and-fire time encoding machine against the ZF receiver presented in \cite{Bernardo2025SIPS}. The results clearly show that the proposed sequence detection method outperforms the spike count-based ZF detector \cite{Bernardo2025SIPS} for both operating modes. In fact, our proposed ZF receiver operating in the low firing rate mode has performance close to the spike count-based ZF receiver operating in the high firing rate mode from SNR = 0 dB to SNR = 12 dB. This can be attributed to the finer temporal resolution exploited by our proposed approach.

\section{Conclusion}
\label{section:conclussion}

This paper presented a statistical framework for timing recovery and data detection in integrate-and-fire time encoding receivers. By modeling the relationship between firing times, timing offset, and transmitted symbols, we formulated a maximum likelihood estimation problem that provides a unified treatment of symbol timing estimation and sequence detection tasks. A practical two-stage receiver architecture was then developed, consisting of an ML-based timing recovery followed by a zero-forcing-based sequence detection. Simulation results demonstrated that the proposed timing recovery algorithm achieves accurate symbol timing estimates with reasonably short pilot sequences, while the proposed sequence detection method significantly improves SER performance compared to existing spike count-based receivers.



\section*{Acknowledgements}\label{section:acknowledgement}

The author acknowledges the DOST-Science Education Institute (DOST-SEI) and the Engineering Research and Development for Technology (ERDT) Program for providing financial support for conference attendance, and the Office of the Chancellor of the University of the Philippines Diliman, through the Office of the Vice Chancellor for Research and Development, for funding support through the PhD Incentive Award Grant 252509 YEAR 1.




\begin{appendices}
\section{Proof of Proposition \ref{proposition:log_likelihood}}\label{proof:appendix_A}

From \eqref{eq:equilibrium_cond}, the firing events satisfy
\begingroup
\allowdisplaybreaks
    \begin{align*}
        \frac{\kappa\cdot \Delta}{T_{k}} =& \frac{1}{T_{k}}\int_{t_{k-1}}^{t_k}[Y(t) + b]dt\nonumber\\
        = & \sum_{l = 0}^{L_{\mathrm{d}}-1}s_l^{(\mathrm{d})}\frac{\int_{t_{k-1}}^{t_k}p(t - (l+L_{\mathrm{p}})T-\tau_\epsilon)dt}{T_{k}} + b \\
        &+\sum_{l = 0}^{L_{\mathrm{p}}-1}s_l^{(\mathrm{p})}\frac{\int_{t_{k-1}}^{t_k}p(t - lT-\tau_\epsilon)dt}{T_{k}} + \underbrace{\frac{\int_{t_{k-1}}^{t_k}Z(t)dt}{T_{k}}}_{Z_{k}}\nonumber
    \end{align*}
\endgroup
 where $Z_k\sim\mathcal{N}\left(0,\frac{N_0}{2\cdot T_{k}}\right)$. The second line follows from the shift-invariant (SI) structure of $X(t)$. The last term is Gaussian due to the property of continuous time Gaussian processes.
 
 Since $\mathbf{Z} = [Z_1,\cdots, Z_K]^T \sim \mathcal{N}\left(\mathbf{0},\frac{N_0}{2}\mathbf{T}\right)$, the conditional probability density function (PDF) of $\left({\mathbf{s}^{(\mathrm{d})}},\tau_{\epsilon}\right)$ given the firing times $\mathbf{t}$ and pilot sequence $\mathbf{s}^{(\mathrm{p})}$ can be written as
 \begin{align*}
f\left({\mathbf{s}^{(\mathrm{d})}},\tau_{\epsilon}|\mathbf{t},\mathbf{s}^{(\mathrm{p})}\right) = \frac{\exp\left(-\frac{\left\|\mathbf{T}^{\frac{1}{2}}\left(\mathbf{y} - \mathbf{P}^{(\tau_{\epsilon})}\mathbf{s}^{(\mathrm{p})}- \mathbf{G}^{(\tau_{\epsilon})}\mathbf{s}^{(\mathrm{d})}\right)\right\|^2}{N_0}\right)}{\left(\pi\cdot N_0\right)^{\frac{K}{2}} \left(\prod_{k=1}^{K} T_{k}^{-1/2}\right)},
 \end{align*}
 where $\mathbf{y}$, $\mathbf{P}^{(\tau_{\epsilon})}$, $\mathbf{G}^{(\tau_{\epsilon})}$, and $\mathbf{T}$ are defined in Proposition \ref{proposition:log_likelihood}. The proof is completed by taking the natural logarithm of the conditional PDF and collecting all terms that do not depend on $\mathbf{s}^{(\mathrm{d})}$ and $\tau_{\epsilon}$ in $C_0(\mathbf{t})$.
 

\end{appendices}



\renewcommand*{\bibfont}{\footnotesize}
\begingroup
\footnotesize  
\printbibliography
\endgroup






\end{document}